\newif\ifapx
\newif\ifgenplot
\newcommand{\ourmaintitle}{Discovering Reliable Approximate Functional Dependencies}
\newcommand{\oururl}{\url{http://eda.mmci.uni-saarland.de/dora/}}
\newcommand{\defemph}[1]{\textbf{#1}}
\newcommand{\mapping}[3]{#1 \! : #2 \to #3}
\newcommand{\card}[1]{| #1 |}
\newcommand{\with}{\! : \,}
\newcommand{\given}{\, | \,}
\newcommand{\mo}{\hat{m}_o\xspace}
\newcommand{\hI}{\ensuremath{\hat{I}}\xspace}
\newcommand{\cF}{\mathcal{F}}
\newcommand{\cI}{\mathcal{I}}
\newcommand{\cP}{\mathcal{P}}
\newcommand{\cR}{\mathcal{R}}
\newcommand{\cS}{\mathcal{S}}
\newcommand{\cT}{\mathcal{T}}
\newcommand{\cX}{\mathcal{X}}
\newcommand{\cZ}{\mathcal{Z}}
\newcommand{\biasn}{b(n)\xspace}
\newcommand{\expectind}[2]{\mathbb{E}_{#1}[#2]\xspace}
\newcommand{\expectn}[1]{\mathbb{E}[#1]\xspace}
\newcommand{\X}{\mathcal{X}}
\newcommand{\vd}{\mathbf{d}}
\newcommand{\vx}{\mathbf{x}}
\newcommand{\vv}{\mathbf{v}}
\newcommand{\NMI}{F\xspace}
\newcommand{\hNMI}{\ensuremath{\hat{F}}\xspace}
\newcommand{\AMI}{\ensuremath{\hat{F}_{\text{adj}}}\xspace}
\newcommand{\ourScore}{\ensuremath{\hat{F}_{0}}}
\newcommand{\MI}{I\xspace}
\newcommand{\D}{\mathbf{D}}
\newcommand{\domain}[1]{V(#1)}
\newcommand{\vdom}[1]{\mathbf{V}(#1)}
\newcommand{\empdom}[1]{\hat{V}(#1)}
\newcommand{\empvdom}[1]{\hat{\mathbf{V}}(#1)}
\begin{document}
	\setlength{\pdfpagewidth}{8.5in}
	\setlength{\pdfpageheight}{11in}
	
	\title{\ourmaintitle}
	
	\author{Panagiotis Mandros}
	\affiliation{%
		\institution{Max Planck Institute for Informatics and Saarland University, Germany}
	}
	\email{pmandros@mpi-inf.mpg.de}
	
	\author{Mario Boley}
	\affiliation{%
		\institution{Max Planck Institute for Informatics and Saarland University, Germany}
	}
	\email{mboley@mpi-inf.mpg.de}
	
	\author{Jilles Vreeken}
	\affiliation{%
		\institution{Max Planck Institute for Informatics and Saarland University, Germany}
	}
	\email{jilles@mpi-inf.mpg.de}
	
	\begin{abstract}
		\label{sec:abstract}
		
		Given a database and a target attribute of interest,  how can we tell whether there exists a functional,  or approximately functional dependence of the target  on any set of other attributes in the data?  How can we reliably, without bias to sample size or dimensionality,  measure the strength of such a dependence?  And, how can we efficiently discover the optimal or  $\alpha$-approximate top-$k$ dependencies?  These are exactly the questions we answer in this paper.
		
		As we want to be agnostic on the form of the dependence,  we adopt an information-theoretic approach, and  construct a reliable, bias correcting score that  can be efficiently computed. Moreover, we give an effective  optimistic estimator of this score, by which for the first time we can mine the approximate functional dependencies  from data with guarantees of optimality.  Empirical evaluation shows that the derived score achieves a good bias for variance trade-off, can be used within an efficient discovery algorithm, and indeed discovers meaningful dependencies. Most important, it remains reliable in the face of data sparsity.
		
	\end{abstract}
	
	\begin{CCSXML}
		<ccs2012>
		<concept>
		<concept_id>10002951.10003227.10003351</concept_id>
		<concept_desc>Information systems~Data mining</concept_desc>
		<concept_significance>500</concept_significance>
		</concept>
		<concept>
		<concept_id>10002950.10003648</concept_id>
		<concept_desc>Mathematics of computing~Probability and statistics</concept_desc>
		<concept_significance>300</concept_significance>
		</concept>
		</ccs2012>
	\end{CCSXML}
	
	\ccsdesc[500]{Information systems~Data mining}
	\ccsdesc[300]{Mathematics of computing~Probability and statistics}
	
	\keywords{Pattern discovery, Information theory}
	
	\maketitle
	
	\section{introduction} \label{sec:introduction}

	Discovering dependencies is an important and well-studied topic in data mining. 
	Most proposals, however, focus specifically on \emph{symmetric} dependencies. 
	That is, they aim to find variable sets  that strongly correlate or associate with a target variable.
	In many applications, however, \emph{asymmetric}, or targeted dependencies are 
	of particular interest. When anonymizing a dataset, for example, we need to be certain a private attribute cannot be reconstructed given the public attribute,
	while we do not care for the opposite direction.
	Similarly, in scientific applications we want to hypothesize whether
	a certain target variable, say an effect, can be explained by
	the observed variables, the potential causes, and not the other way around.
	Generally, an effective procedure to detect functional dependencies 
	from data allows us to rule out alternate theories about our domain and 
	to determine whether finding concrete models, e.g., by statistical learning, 
	is worthwhile, or if we rather should acquire more data 
	or enrich our feature space first~\cite{ghiringhelli2015big}.
	
	More formally, given a target variable $Y$ and a set of attributes $\X$, 
	we want to \emph{measure} the degree at which $Y$ has a functional, 
	or an \emph{approximate} functional dependence on $\X$, i.e., if $Y \approx f(\X)$. 
	Additionally, we want to efficiently discover whether any such $\X$ 
	exists in our data. The database community studied how to infer \emph{exact} 
	functional dependencies, as these allow for normalization, i.e., reducing redundancy. These methods are not suited to our end, however, 
	as they do not measure the approximation in terms of an intuitive score, and in addition, make implicit closed-world assumptions based on the 
	schema of the data~\cite{huhtala1999tane,liu2012discover,giannella:2004:fraction}. 
	
	On the contrary, information theory provides an intuitive and interpretable measure to address these issues. The fraction of information quantifies functional dependence in terms of proportional reduction of uncertainty about $Y$ when observing  $\X$~\cite{cavallo:1987:fraction,dalkilic:2000:ind,reimherr:2013:fraction}. Information-theoretic measures, however, are sensitive to data sparsity and as a result, the fraction of information overestimates the amount of dependence~\cite{romano:2016:chance}.  For large dimensionalities of $\X$,  it is even possible that a functional dependence is indicated when $\X$ and $Y$ are actually independent. This makes it a non-\emph{reliable} score. In addition, maximizing it is {\bf NP}-hard \cite{krause2012near}.

	In this paper we propose a \emph{reliable} measure for approximate functional dependencies
	based on the fraction of information. Even in extreme cases of data sparsity, it does not show dependence. In addition, we derive an effective optimistic estimator for this score, that allows for an admissible branch-and-bound algorithm to discover the top-$k$ optimal, or $\alpha$-approximate optimal strongest dependencies. Empirical evaluation shows that the derived score achieves a good bias for variance trade-off, and in addition, it does not favor spurious dependencies. The corresponding optimistic estimator is a data-dependent quantity, and by exploiting the structure of the data, leads to an effective search algorithm. Lastly, concrete findings in two exemplary application domains, AI and Materials Science, reproduce sensible domain information.
	
	The main contributions of this paper are the following. We
	
	\begin{itemize}[noitemsep,topsep=2pt]
		\item[(i)] propose a consistent estimator for the fraction for information score that is not prone to spurious dependencies,
		\item[(ii)] provide an efficient branch-and-bound algorithm for the discovery of optimal, and $\alpha$-approximate optimal top-$k$ dependencies, and
		\item[(iii)] provide empirical evaluation on a wide range of real and synthetic datasets.
	\end{itemize}
	
	The paper is structured as follows. We formally introduce the two problems we consider in Section~\ref{sec:problem}. Next, in Section~\ref{sec:score} we propose our fraction of information score, and in Section~\ref{sec:search} we detail how to optimize it by deriving a bounding function for a branch-and-bound search scheme. Following, in Section~\ref{sec:exps} we evaluate the performance on a variety of tasks. Finally, we round up with conclusions in Section~\ref{sec:concl}.
	
	\section{Problem Definition} \label{sec:problem}
	We consider a discrete sample space governed by some probability mass function $p$ for which we have defined $d+1$ discrete random variables $\cR = \lbrace X_1,\dots,X_d,Y\rbrace$ with domains $\domain{X_1},\dots,\domain{X_d}$, and $\domain{Y}$, respectively.
	Subsets $\cS \subseteq \cR$ are identified with vector-valued random variables in the usual way with domain $\vdom{\cS} = \bigtimes_{R \in \cS} \domain{R}$.
	We consider the variable $Y$ as the \defemph{output variable} and the remaining variables $\cI=\lbrace X_1,\dots,X_d \rbrace$ as the \defemph{input variables}, and our goal is to discover subsets of the input variables $\X \subseteq \cI$ that approximately \emph{determine} $Y$.
	In particular, we are interested in approximations to the concept of \defemph{functional dependencies}, i.e., the case when there is a function $\mapping{f}{\vdom{\X}}{\domain{Y}}$ such that for all $\vx \in \vdom{\X}$ it holds that 
	\begin{equation}
		p(Y=y \mid \X=\vx)=
		\begin{cases}
			\; \; 1 &, \text{ if } y=f(\vx)\\
			\; \; 0 &, \text{ otherwise}
		\end{cases} \enspace .
		\label{eq:fd}
	\end{equation}
	Relaxing this rather strict concept is necessary because it is rare that such a completely deterministic relationship exists---if the random variables correspond to measurements of real-world quantities there are usually unobserved subtle effects or noise that cause Eq.~\eqref{eq:fd} to not hold exactly. 
	
	One traditional approach to relax Eq.~\eqref{eq:fd} is to use instead the condition $p(Y=y \mid \X=\vx) \geq 1-\epsilon$ if $y=f(\vx)$, for some fixed value $\epsilon \in (0,1]$, i.e., to allow a certain fraction of events to not obey the functional relation.
	However, as with any parameterization based on a hard threshold, this parameter is difficult to set in practice and additionally only provides a qualitative and not a quantitative relaxation \cite{giannella:2004:fraction}. That is, it does not allow us to express ``how far'' is $Y$ from being determined by $\X$. In order to address these issues one can quantify the degree of functional dependence through information theoretic measures.
	A particularly useful way of doing this  is to use the concept of \defemph{fraction of information} ($\NMI$)~\cite{cavallo:1987:fraction, reimherr:2013:fraction,dalkilic:2000:ind}, which is defined as
	\begin{align}
		\NMI (\X ; Y) = \frac{H(Y)-H(Y \mid \X)}{H(Y)} 
	\end{align}
	where $H(Y) = -\sum_{y \in \domain{Y}} p(y)\log(p(y))$ denotes the \defemph{Shannon entropy} and $H(Y \mid \X)=\sum_{\vx \in \vdom{\X}} p(\vx) H(Y \mid \X=\vx)$ the conditional Shannon entropy~\cite{shannon:1948:communication}. The numerator is referred to as \defemph{mutual information} $\MI(\X;Y)=H(Y)-H(Y \mid \X)$. The entropy measures the uncertainty about $Y$, while the conditional entropy measures the uncertainty about $Y$ after observing $\X$. The fraction of information then represents the proportional reduction of uncertainty about $Y$ by knowing $\X$. 
	Moreover, the extreme values, $\NMI(\X;Y)=1$ and $\NMI(\X;Y)=0$, correspond to a functional dependence and statistical independence, respectively. With this notion, we can go about discovering approximate functional dependencies from  data.
	
\begin{figure}[t]
	\centering
	\begin{minipage}[b]{0.75\linewidth}
		\ifgenplot
		\ifpdf
		\tikzsetnextfilename{figure1}
		\fi
		\begin{tikzpicture}
		\begin{axis}[eda line, xlabel={Dimensionality of $\cX$},ylabel={Fraction of Information}, ymin=0.0, ymax=0.75, height=5cm, width=\linewidth, xmin=1.0, xmax=5.0, legend style={nodes={scale=0.8, transform shape}, at={(0.06,1.0)}, anchor=north}, ytick={0.0,0.25, ...,1.0}]
		\addplot+[mambacolor1] table[x index=0, y index=2, header=true] {../expres/cardinality_vs_independence.dat};
		\addplot+[mambacolor3] table[x index=0, y index=6, header=true] {../expres/cardinality_vs_independence.dat};
		\legend{\ourScore, \hNMI}
		\end{axis}
		\end{tikzpicture} 	 	
		\else
		\includegraphics{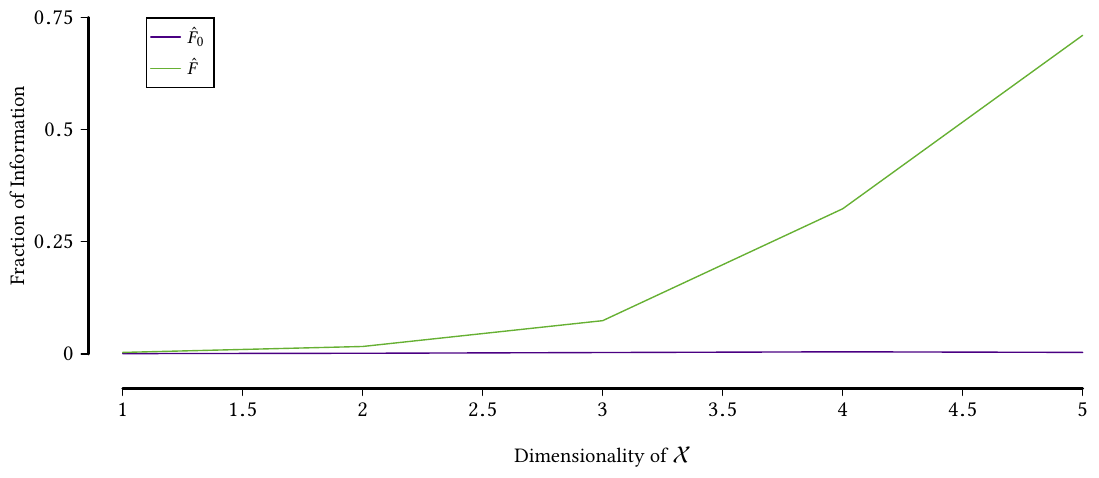}
		\fi 	
	\end{minipage}
		\caption{The fraction of information score against increasing dimensionality for $\X$; using $n=1000$ samples from $\cR=\{X_1,\dots,X_5,Y\}$ where all variables are mutually independent and $\card{\domain{A}}=4$ for all $A \in \cR$.
			Since $\cX$ and $Y$ are independent, the reliable fraction of information should be constantly 0. However, the baseline estimator $\hNMI$ shows increasing functional dependence. On the contrary, our proposed corrected score,  $\ourScore$, is always 0.}
		\label{fig:increasingDomain}
	\end{figure}
	
	For that we assume that a dataset $\D_n \in \vdom{\cR}^n$ is given consisting of $n$ i.i.d. samples $\vd_1,\dots,\vd_n$ generated according to the joint distribution $p$. 
	Such a dataset induces empirical probability estimates for all our random variables $\cS \subseteq \cR$ given by
	$\hat{p}(\cS=\vv)=c(\cS=\vv)/n$ with the \defemph{empirical counts}
	\[
	c(\cS=\vv)=\card{\lbrace \vd \in \D_n \with \vd(S)=\vv(S) \text{ for all } S \in \cS \rbrace}
	\]
	(where $\vd(S)$ is the entry in $\vd$ corresponding to variable $S$).
	In turn, these empirical probabilities give rise to empirical estimators $\hat{H}, \hat{\MI}, \hat{\NMI}$ for our quantities of interest, $H$, $\MI$, and $\NMI$.
	However, trying to directly discover approximate functional dependence using the empirical fraction of information $\hat{\NMI}$ is bound to fail, because this estimator is not unbiased, i.e., we have $\expectn{\hat{\NMI}(\X;Y)} \neq \NMI(\X;Y)$ for finite $n$, as it is the case with many dependence measures \cite{nguyen:2010:chancejournal,romano:2014:smi,hubert:1985:rand,dobra:2001:bias,white:1994:bias,kononenko:1995:bias,uds,wang:2017:bias}.
	This holds in particular also for the case when $\NMI(\X;Y)=0$, i.e., when $\X$ contains no information about $Y$. This situation, which is referred to as the (lack of) zero-baseline property \cite{romano:2016:chance}, can be misleading in practice.
	Even worse, independent of the true value $\NMI(\X;Y)$, the bias depends on the number of attainable distinct values for $\X$, and favors larger attribute sets over smaller ones (which follows from the bias of the empirical mutual information, see, e.g., \cite{roulston1999estimating}). See Fig.~\ref{fig:increasingDomain} for a quantitative demonstration of both of these facts.
	
	Even if a more suitable estimator was available, the challenge of which variable sets $\X \subseteq \cI$ to test for high functional dependence scores remains---naively considering all $2^d$ options is practically infeasible.
	Thus, to derive a useful method for the reliable discovery of functional dependencies from data, we have to solve the following two problems:
	\begin{itemize}
		\item[(i)] Find a more reliable empirical estimator $\hat{\NMI}'$ for $\NMI$; in particular one that satisfies the zero-baseline property and obtains better dimensionality bias.
		\item[(ii)] Identify structural properties of $\hat{\NMI}'$ that allow to derive an effective search algorithm for discovering the variable sets with the highest functional dependence scores.
	\end{itemize}
	We will present solutions to each of these problems in turn, in Sections~\ref{sec:score} and \ref{sec:search}, respectively.

	\section{Reliable fraction of information}
	\label{sec:score}
	Intuitively, the reason why $\hat{\NMI}$ is unreliable as an estimator for $\NMI$ is that it does not take into account the confidence in the empirical estimates 
	$\hat{H}(Y|\X=\vx)$. This is especially obvious in the extreme case when the empirical count $c(\X=\vx)$ is equal to $1$. 
	In this situation $c(Y=y, \X=x)=1$ exactly for one value of $y \in \domain{Y}$ and, hence, $\hat{H}(Y|\X=\vx)$ is trivially equal to $0$ independent of the true distribution $p$. Moreover, if the data size $n$ is small compared to the observed domain of $\cX$, this case is likely to occur for many of the sampled values for $\cX$---even when $\NMI(\X;Y)=0$, which coincides with the highest error, because then $H(Y|\X=\vx)=H(Y)$.
	
	This last observation suggests a path to a more reasonable estimator: 
	while it is hard to determine the bias in the general case, it is likely much easier under the assumption of independence $\NMI(X;Y)=0$, which, as pointed out above, corresponds exactly to the case of highest estimation error when the empirical observations are sparse.
	More concretely, let us denote by $b_0(\cX,Y,n)$ the \defemph{bias under independence} defined as
	\[
	b_0(\cX,Y,n)=\expectn{\hat{\NMI}(\X;Y) \given \NMI(X;Y)=0} 
	\]
	where the expectation is taken w.r.t. the random dataset $\D_n \sim p $ of size $n$. Let us assume that we have a good estimator $\hat{b}_0$ for this quantity. With this we can define a corrected estimator, let us refer to it as \defemph{reliable fraction of information} $\ourScore$, as follows:
	\[
	\ourScore(\X ; Y) =  \hat{\NMI}(\X;Y)-\hat{b}_0(\X,Y,n) \enspace .
	\]
	This approach essentially trades the bias of $\hat{\NMI}$ with that of $\hat{b}_0$ when $\NMI(X;Y)=0$. We have:
	\begin{align*}
		& \expectn{\ourScore(\X;Y) - \NMI(X;Y) \given \NMI(X;Y)=0}\\
		=& \expectn{\hat{\NMI}(\X;Y)-\hat{b}_0(\X,Y,n)-0  \given \NMI(\X;Y)=0}\\
		=& b_0(\X,Y,n) - \expectn{\hat{b}_0(\X,Y,n) \given \NMI(\X;Y)=0} \enspace .
	\end{align*}
	
	\newcommand{\epm}[1]{\hat{\mathbb{E}}_0[#1]}
	\newcommand{\ppm}[1]{\hat{\mathbb{P}}_0[#1]}
	A non-parametric choice for $\hat{b}_0$, which we use in this paper, corresponds to the \emph{permutation model}~(\citet[Chap. 11.2]{lancaster:1969:chi}), i.e., considering all possible datasets $\D'_n$ resulting from independently permuting the $Y$-values associated to the $\X$-values in the given empirical data $\D_n$. Formally, let $S_n$ denote the symmetric group of degree $n$, i.e., $S_n$ consists of all $n!$ bijections $\mapping{\sigma}{\{1, \dots, n\}}{\{1, \dots, n\}}$.
	For a bijection $\sigma \in S_n$, let $Y_\sigma$ denote the permuted version of $Y$, i.e., the variable with data entries $\vd_i(Y_\sigma)=\vd_{\sigma(i)}(Y)$.
	With this we can define the \defemph{permutation model} as the probabilities $\hat{\mathbb{P}}_0$ (and corresponding expectations $\hat{\mathbb{E}}_0$) resulting from permuting the empirical data of $Y$ by a uniform random permutation from $S_n$.
	Using this model, the expectation of the empirical mutual information under independence, $\mo$, is estimated as
	\begin{align}
		\mo(\cX,Y,n)  =  \epm{\hI (\cX,Y_\sigma)}= \frac{1}{n!} \sum_{\sigma \in S_n} \hI (\cX,Y_\sigma) \enspace .
	\end{align}
	
	Clearly, a naive evaluation of this expression is computationally infeasible (order of $n!$). However, one can dramatically reduce the complexity by reformulating the above expression as a function of contingency table cell values and exploiting its symmetries~\cite{nguyen:2009:ami}.
	More precisely, let the observed domains of $\X$ and $Y$ be $\empvdom{\X}=\lbrace \vx_1,\dots, \vx_R \rbrace$ and $\empdom{Y}=\lbrace y_1,\dots, y_C \rbrace$, respectively.
	Moreover, we define shortcuts for the observed marginal counts $a_i=c(\X=\vx_i)$ and $b_j=c(Y=y_j)$ as well as for the joint counts $c_{i,j}=c(\X=\vx_i, Y=y_j)$.
	The complete joint count configuration $c=(c_{i,j} \with 1\leq i \leq C, 1 \leq j \leq R)$ we refer to as \defemph{contingency table}.
	Noting that $\hat{I}(\X,Y_\sigma)$ is a function of the contingency table $c$ resulting from the random permutation, the estimator $\hat{m}_0$ can be rewritten as
	\begin{equation}
		\hat{m}_0(\X,Y,n)=\sum_{c \in \mathcal{T}}\ppm{c} \hat{I}(c) = \sum_{c \in \mathcal{T}}\ppm{c}\sum_{i=1}^{R} \sum_{j=1}^{C}\frac{c_{ij}}{n} \log \frac{ c_{ij} n} {a_i b_j}
		\label{eq:permodel}
	\end{equation}
	where $\cT=\cT(\X,Y)$ is the set of all possible contingency tables indexed by the values $\empvdom{\X}$ and $\empdom{Y}$ (note that $\ppm{c}>0$ only for $c$ resulting in the observed marginal counts $a,b$).
	
	As this expression is still infeasible, \citet{nguyen:2009:ami} propose to re-order the terms of the sum according to the possible count values that can be found in individual table cells.
	The permutation model implies that the empirical counts $c_{ij}$ for the joint events $\X=\vx_i, Y=y_j$ are generated according to the probabilities 
	\[
	\ppm{c_{ij}=k}=h(k;a_i,b_j,n)
	\]
	where $h$ is the probability mass function of the hypergeometric distribution with $c_{ij}$ the number of successes, $a_i$ the number of draws, $b_j$ the number of total successes, and $n$ the population size. 
	This allows us to group terms according to their count values for a specific table cell, which can be systematically enumerated from the support of the hypergeometric distribution, i.e., $k \in [\max(0,a_i+b_j-n),\min(a_i,b_j)]$. We can then compute $\hat{m}_0$ as
	\[
	\hat{m}_0(\X,Y,n) =  \sum_{i=1}^{R} \sum_{j=1}^{C} \sum_{k=\max(0,a_i + b_j-n)}^{\min(a_i,b_j)} h(k ; a_i ,b_j,n) \frac{k}{n} \log \frac{kn} {a_i b_j} \enspace .
	\]
	Using the recurrence relation of the hypergeometric distribution, the computational complexity can be further reduced to the order of $\max(nR,nC)$~\cite{romano:2014:smi}. Moreover, it is easily parallelizable. 
	Hence, we end up with an efficiently computable estimator for the bias under independence $\hat{b}_0(\cX,Y,n)=\hat{m}_0(\X,Y,n) / \hat{H}(Y)$. 
	
	In addition to being computationally efficient, the resulting reliable functional dependence score $\ourScore(\X;Y)=\hat{\NMI}(\X;Y)-\hat{b}_0(\cX,Y,n)$ satisfies several other properties.
	First of all, it is indeed a consistent estimator of $\NMI$.
	One can show \citep{nguyen:2010:chancejournal} that $\lim_{n \rightarrow \infty} \hat{m}_0(\X,Y,n)=0$, which implies together with the consistency of $\hat{\NMI}$ that 
	\[
	\lim_{n \rightarrow \infty} \ourScore(\X;Y) = \NMI(\X;Y) \enspace .
	\]
	Moreover, $\ourScore$ remains upper-bounded by $1$, although this value is only attainable in the limit case $n \to \infty$ (for true functional dependencies). 
	Most importantly, in contrast to the naive estimator, $\ourScore$ approaches zero\footnote{It fact, it is principally not lower bounded by $0$ since the empirical fraction of information can be less than the correction term. However, these are rare cases, which  strongly indicate independence.} as the data size relative to the empirical domain $\empvdom{\X}$ approaches one. 
	In other words, $\hat{b}_0(\cX,Y,n)$ penalizes spurious dependencies that can easily appear for high dimensional $\X$---justifying the name \emph{reliable} fraction of information.
	
	\section{Search scheme}
	\label{sec:search}
	\newcommand{\oest}{\overline{f}}
	\newcommand{\branch}{\mathbf{r}}
	After deriving a suitably corrected empirical estimator for the fraction of information, we can now turn to the problem of using it for the discovery of approximate functional dependencies from a given dataset.
	Essentially, this is a combinatorial optimization problem where, given a dataset $\D_n$, the problem is to find a subset $\X \subseteq \cI$ with maximal value of $\ourScore$.
	However, as usual in pattern discovery, we are not just interested in one but several score maximizers---to produce more diverse insights into the data domain and to provide alternatives in subsequent applications.
	Hence, we end up with a top-$k$ pattern search formulation: 
	
	\textit{{\bf Given} a dataset $\D_n$ consisting of $n$ i.i.d. samples of random variables $\cI$ and $Y$ and a number $k$, 
		{\bf find} a family $\cF_k$ of variable sets $\X_1,\dots,\X_k \subseteq \cI$  such that no variable set outside $\cF_k$ has a higher $\ourScore$-score than any of the sets in $\cF_k$, i.e., for all $\X \in \cF_k$ and $\cZ \in \mathcal{P}({\cI})\setminus \cF_k$ it holds that 
		$\ourScore(\cZ;Y) \leq \ourScore(\cX;Y)$.
	}
	
	\DontPrintSemicolon
	\SetKwFunction{bstbb}{Bst-BB}
	\SetKwFunction{qempty}{empty}
	\SetKwFunction{qaddall}{addAll}
	\SetKwFunction{qtop}{top}
	\SetKwFunction{topk}{top-k}
	\SetKwFunction{suc}{suc}
	\SetKwFunction{argmax}{argmax}
	\begin{algorithm}[t]
		\bstbb{$\mathbf{Q}$,$\cF_k$,$k$}: 
		\Begin{
			\eIf{$\mathbf{Q}=\emptyset$ {\bf or} $\oest(\qtop{$\mathbf{Q}$})/f(\cF_k[k]) \leq \alpha$}{\KwRet{$\cF_k$}\;}
			{
				$\mathbf{R}=\branch(\qtop{$\mathbf{Q}$})$\; 
				$\cF'_k = \topk{$\cF_k \cup \mathbf{R}$}$\;
				$\mathbf{Q}'=(\mathbf{Q} \setminus \{\qtop{$\mathbf{Q}$}\}) \cup \{\cX \in \mathbf{R} \with \oest(\cX)/ f(\cF'_k[k])\geq \alpha\}$\;
				\KwRet{\bstbb{$\mathbf{Q}'$,$\cF_k$,$k$}}\;
			}
		}
		\BlankLine
		$\cF_k$ = \bstbb{$\{\bot\},\bot,k$} 
		\caption{Best-first branch-and-bound; Given input and output variables $\cI$ and $Y$, finds $\alpha$-approximation to top-$k$ result set $\cF_k \subseteq \cP(\cI)$ w.r.t. $f(\X)=\ourScore(\X;Y)$ using optimistic estimator $\oest(\X)=1-\hat{b}_0(\X;Y,n)$.}
		\label{alg:bb}
	\end{algorithm}
	In the search for an algorithm solving the above problem, it is first important to note that maximizing mutual information is {\bf NP}-hard~\cite{krause2012near}---even approximately and even in the restricted case when all $\X, \cZ \subseteq \cI$ are conditionally independent given $Y$.
	While it is an open question whether this result implies hardness of \ourScore -maximization (the correction term changes maximization order), this is a substantial indication that no polynomial time algorithm for our problem exists (even with constant $k$).
	On the other hand, the branch-and-bound framework (see, e.g., \citet[Chap. 12.4]{mehlhorn2008algorithms}), while not efficient in terms of the worst-case complexity, can often yield algorithms for hard optimization problems that are very effective in practice---particularly in the best-first search variant.
	
	In a nutshell, best-first branch-and-bound maximizes an objective function $\mapping{f}{\Omega}{\mathbb{R}}$ defined on some abstract search space $\Omega$ with the help of a \defemph{branch operator} $\mapping{\branch}{\mathcal{P}(\Omega)}{\mathcal{P}(\Omega)}$ and a matching auxiliary selection and \defemph{bounding function} $\mapping{\oest}{\Omega}{\mathbb{R}}$.
	The role of the branch operator is to non-redundantly generate the search space from some designated root element $\bot \in \Omega$, i.e., for all $\omega \in \Omega$ there must be a unique sequence $\bot=\omega_1,\dots,\omega_l=\omega$ such that $\omega_{i+1} \in \branch(\omega_i)$ for $i=1,\dots,l-1$.
	The bounding function must guarantee the property 
	\[
	\oest(\omega) \geq \max \lbrace f(\omega') \with \omega' \in \branch^*(\omega) \rbrace
	\]
	where $\branch^*(\omega)$ denotes the set of all $\omega' \in \Omega$ that can be generated from $\omega$ by multiple applications of $\branch$.
	Based on these ingredients, a branch-and-bound algorithm simply enumerates $\Omega$ starting from $\bot$, but uses $\oest$ to avoid expanding elements that cannot yield an improvement over the best solution found so far.
	
	For our problem the search space is $\mathcal{P}(\cI)$, for which a suitable branch operator is simply given by
	\[
	\branch(\X)=\{\X \cup \{X_i\} \with i > \max\{j \with X_j \in \X\}\}
	\]
	i.e., we ensure non-redundant generation by creating a lexicographical order on the power set of the input variables and only enumerate lexicographically larger elements from a given set $\X$. In order to derive a bounding function $\oest$ for our objective function $f(\cX)=\ourScore(\X;Y)$, we first need to
	establish another central property of the correction term $\hat{b}_0(\X,Y,n)$.
	\begin{theorem}
		\label{monononicityProof}
		Given two sets of variables $\X \subset \X' \subseteq \cI$ then $\hat{b}_0(\X,Y,n) \leq \hat{b}_0(\X',Y,n)$, i.e., the correction term is monotonically increasing with respect to the subset relation.
	\end{theorem}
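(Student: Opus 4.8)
The plan is to prove the monotonicity claim by analysing how the correction term $\hat{b}_0(\X,Y,n) = \hat{m}_0(\X,Y,n)/\hat{H}(Y)$ changes when we enlarge $\X$ to $\X'$. Since the denominator $\hat{H}(Y)$ does not depend on the input variables at all, it suffices to show $\hat{m}_0(\X,Y,n) \leq \hat{m}_0(\X',Y,n)$, i.e.\ monotonicity of the expected-under-permutation mutual information. The natural route is to exploit the contingency-table/hypergeometric formulation from Eq.~\eqref{eq:permodel}: passing from $\X$ to $\X'$ refines the partition of the rows of the contingency table, splitting each $\X$-cell of marginal count $a_i$ into several $\X'$-cells whose counts sum to $a_i$. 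So the key step is a \emph{refinement lemma}: if a row block with count $a$ is split into sub-blocks with counts $a^{(1)},\dots,a^{(t)}$ summing to $a$, then the permutation-model contribution of the split blocks is at least that of the original block.

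Concretely, I would reduce to the single-column, single-split case and work with the closed form
\[
g(a,b,n) = \sum_{k=\max(0,a+b-n)}^{\min(a,b)} h(k;a,b,n)\,\frac{k}{n}\log\frac{kn}{ab}\enspace ,
\]
so that $\hat{m}_0(\X,Y,n) = \sum_{i,j} g(a_i,b_j,n)$, and show superadditivity in the first argument: $g(a^{(1)}+a^{(2)},b,n) \leq g(a^{(1)},b,n) + g(a^{(2)},b,n)$. The cleanest way to see this is probabilistic rather than via manipulating hypergeometric sums directly. Observe that $\hat{m}_0$ is exactly $\hat{\mathbb{E}}_0[\hat{I}(\X;Y_\sigma)]$, an average of empirical mutual informations over a family of datasets that all share the same $Y$-column and the same $\X$-values but permute the pairing. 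For each fixed permutation $\sigma$, the \emph{data processing inequality} for empirical mutual information gives $\hat{I}(\X;Y_\sigma) \leq \hat{I}(\X';Y_\sigma)$, because $\X$ is a deterministic function of $\X'$ on the observed data (the coarser grouping). Taking expectations over $\sigma$ under the permutation model then yields $\hat{m}_0(\X,Y,n) \leq \hat{m}_0(\X',Y,n)$ directly, and dividing by $\hat{H}(Y) > 0$ finishes the proof. One subtlety to state carefully: the permutation model distribution over $Y_\sigma$ is the same for $\X$ and $\X'$ (it only shuffles $Y$), so the expectations are taken with respect to the identical measure, and the inequality survives the average.

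The main obstacle I anticipate is making the ``$\X$ is a function of $\X'$, hence $\hat{I}(\X;Y_\sigma)\le \hat{I}(\X';Y_\sigma)$'' step fully rigorous at the level of empirical distributions rather than true distributions — i.e.\ verifying that the data processing inequality applies verbatim to the plug-in estimators $\hat I$ computed from finite counts. This is true because $\hat I$ is just the mutual information of the empirical joint distribution $\hat p$, and under $\hat p$ the variable $\X$ is indeed a deterministic image of $\X'$ (every realized $\X'$-value maps to exactly one $\X$-value), so the standard chain-rule argument $\hat I(\X';Y_\sigma) = \hat I(\X;Y_\sigma) + \hat I(\X';Y_\sigma \mid \X) \ge \hat I(\X;Y_\sigma)$ goes through. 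If one instead wants an elementary, self-contained argument avoiding information-theoretic machinery, the fallback is to prove superadditivity of $g$ in its first argument by a direct coupling of the relevant hypergeometric random variables together with convexity of $x\mapsto x\log x$ — but I expect the data-processing route to be shorter and cleaner, so I would present that as the main proof and relegate the combinatorial identity to a remark if needed.
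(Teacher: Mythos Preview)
Your approach is correct and uses the same key ingredient as the paper's proof---the chain rule of mutual information applied to the empirical distribution, giving $\hat{I}(\X;Y_\sigma) \leq \hat{I}(\X';Y_\sigma)$ for each fixed permutation $\sigma$---but you apply it more directly. The paper first groups permutations by the contingency table they induce, defines a projection $\pi:\cT'\to\cT$ from tables of $\X'$ to tables of $\X$, and then must separately establish the probability identity $\ppm{c} = \sum_{c'\in\pi^{-1}(c)}\ppm{c'}$ via a careful argument about the permutation sets $S_n[c]=\{\sigma\in S_n: c(\cdot,Y_\sigma)=c\}$; only then can it combine this with the pointwise inequality $\hat I(c)\le \hat I(c')$. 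Your observation that both $\hat{m}_0(\X,Y,n)$ and $\hat{m}_0(\X',Y,n)$ are uniform averages over the \emph{same} symmetric group $S_n$ bypasses this bookkeeping entirely: the per-$\sigma$ inequality can be averaged immediately, and the probability identity becomes superfluous. The paper's route has the minor advantage of keeping the connection to the hypergeometric contingency-table formulation of Eq.~\eqref{eq:permodel} explicit, but your argument is shorter and loses nothing in rigor. Your fallback (superadditivity of $g$ in its first argument via coupling and convexity) would also work but is, as you suspected, unnecessary.
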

	\begin{proof}
		It is sufficient to consider the case $\X'=\X \cup \{X\}$ for some $X \not \in \X$, i.e. the cardinality differs by one variable. The general case follows inductively. Following the notation of Eq.~\eqref{eq:permodel}, let the marginals of $Y,\X$, and $\X'$ be $a_i$ for $i=1, \hdots, R$, $b_j$ for $j=1, \dots, C$, and $b'_j$ for $j=1, \hdots, C'$, respectively. Note that $C' \geq C$.
		We need to show that $\hat{m}_0(\X,Y,n) \leq \hat{m}_0(\X',Y,n)$, i.e., per definition $\sum_{c \in \mathcal{T}} \ppm{c} \hat{\MI}(c) \leq \sum_{c' \in \mathcal{T'}} \ppm{c'}\hat{\MI}(c')$.
		
		In order to do this, we first define a relation between the contingency tables of $\cT=\cT(\X,Y)$ and $\cT'=\cT(\X',Y)$.
		Let $\mapping{\pi}{\empvdom{\cX'}}{\empvdom{\cX}}$ be the projection of values from $\X'$ to values of $\X$ defined by $\pi(\vx')=\vx$. We can extend this projection to the sets of contingency tables $\mapping{\pi}{\mathcal{T}'}{\mathcal{T}}$ by finding the counts in the column corresponding to $\vx \in \vdom{X}$ of $\pi(c')$ as the sum of the columns in $c'$ corresponding to $\pi^{-1}(\vx)$. We will prove the claim by showing that for all $c \in \cT$ we have $\ppm{c} \hat{\MI}(c) \leq \sum_{c' \in \pi^{-1}(c)}\ppm{c'}\hat{\MI}(c')$.
		
		First, it follows from the chain rule of information
		and from mutual information being non-negative~\cite{cover} that $\hat{\MI}(c) \leq \hat{\MI}(c')$ for $c=\pi(c')$. Next we show that $\ppm{c} = \sum_{c' \in \pi^{-1}(c)}\ppm{c'}$, which concludes the proof.
		For any contingency table $z \in \cT(\cZ,Y)$ let $S_n[z]=\{\sigma \in S_n \with c(\cZ,Y_\sigma)=c\}$ denote the set of permutations that result in $z$.
		Let $\sigma \in S_n\setminus S_n[c]$. This means that $c_{i,j}(\X,Y) \neq c_{i,j}(\X,Y_\sigma)$ for at least one cell $i,j$.
		Denoting by 
		\[
		\pi^{-1}(j)=\{j' \with 1 \leq j' \leq C', \pi(\vx'_{j'})=\vx_j\}
		\]
		the set of all indices of values of $\cX'$ that are projected down to $\vx$, it follows
		by the definition of $\pi$ that
		\[
		\sum_{j' \in \pi^{-1}(j)} c'_{i,j'}(\cX',Y) \neq \sum_{j' \in \pi^{-1}(j)} c'_{i,j'}(\cX',Y_\sigma) \enspace .
		\]
		So for at least one $j' \in \pi^{-1}(j)$ it is $c'_{i,j'}(\cX',Y) \neq c'_{i,j'}(\cX',Y_\sigma)$, and, thus we also have that $\sigma \not\in S_n[c']$ and can conclude
		\begin{equation}
			\label{eq:supseteq343}
			S_n[c] \supseteq \bigcup_{c' \in \pi^{-1}(c)} S_n[c'] \enspace .
		\end{equation}
		Now let $z' \in \cT(\X',Y)$ with $\pi(z') \neq c$ and assume for a contradiction that $S_n[c] \supset S_n[c'] $, i.e., there is an $\sigma \in S_n[c] \cap S_n[z']$.
		Let us denote $z=\pi(z')$.
		Since $S_n[c] \cap S_n[z]=\emptyset$, we know that $\sigma \not\in S_n[z]$.
		However, it follows from Eq.~\eqref{eq:supseteq343} that $\sigma \not\in S_n[z']$---a contradiction, and, hence $S_n[c] = \bigcup_{c' \in \pi^{-1}(c)} S_n[c']$.
		Thus, as desired 
		\[
		\ppm{c}=\frac{\card{S_n(c)}}{\card{S_n}}=\sum_{c' \in \pi^{-1}(c)} \frac{\card{S_n(c')}}{\card{S_n}} = \sum_{c' \in \pi^{-1}(c)}\ppm{c'} \enspace .
		\]
	\end{proof}
	
	With this theorem (and the fact that the conditional entropy is bounded from below by 0), it follows for all $\mathcal{X} \subseteq  \mathcal{X}' \subseteq \mathcal{I}$ that	\begin{align*}
		\ourScore(\X';Y) = & \frac{\hat{H}(Y)-\hat{H}(Y \mid \X' )}{\hat{H}(Y)}  - \hat{b}_0(\X',Y,n) \\   
		\leq & 1-\hat{b}_0(\X,Y,n)= \ourScore'(\X;Y)
	\end{align*}
	Hence, since $\X \subseteq \X'$ for $\X' \in \branch(\X)$, we can use $\oest(\cX)=1-\hat{b}_0(\X,Y,n)$ as valid bounding function for the branch-and-bound search. Besides the features mentioned above, the search scheme also provides the option of relaxing the required result guarantee to that of an $\alpha$-approximation for accuracy parameter $\alpha \in (0,1]$.
	This means that the resulting family of variable sets $\cF_k$ will satisfy the relaxed condition that for all $\X \in \cF_k$ and $\cZ \in \mathcal{P}({\cI})\setminus \cF_k$ it holds that $\alpha \ourScore(\cZ;Y) \leq \ourScore(\cX;Y)$.
	Hence, using $\alpha$-values of less than $1$ allows to trade accuracy for computation time. 
	
	The pseudocode in Algorithm~\ref{alg:bb} summarizes the resulting method for the discovery of approximate functional dependencies. The algorithm maintains a priority queue $\mathbf{Q}$ that holds the search frontier and a current result set $\cF_k$ throughout the search. In the beginning of each iteration, it checks whether the search has terminated, which is the case either when the frontier is empty or the potential of top-potential element from the frontier (w.r.t. the bounding function) is less than the $k$-th best $\ourScore$-value of the current result. As long as this condition is not satisfied, the search continuous by expanding the top-potential variable set (line 5), and using its successors to update the current result set (line 6), as well as the priority queue (line 7).
	
	\section{Empirical Evaluation}
	\label{sec:exps}
	\newcommand{\hoest}{\overline{h}}
	
	In this section, we study the empirical performance of discovering approximate functional dependencies based on $\ourScore$.
	This includes, the bias of $\ourScore$ as an estimator of the true $\NMI$ functional, the performance of the bounding function $\oest$ in branch-and-bound search, and two concrete examples of functional dependencies in real datasets.
	
	\subsection{Estimation Bias}
	In this section we evaluate the bias and variance of our corrected estimator $\ourScore$. It is instructive to see the behavior of the bias for various amounts of dependence, and not in the particular case of independence, i.e., $\NMI=0$, where $\ourScore$ aims to be unbiased (see Fig.~\ref{fig:increasingDomain} for an empirical confirmation of this fact).   For that let us denote by $P$ the set of all joint probability mass functions over two random variables $X$ and $Y$ with $\card{\domain{X}}=\card{\domain{Y}}=3$, and by $P[a,b]$ all such probability mass functions for which we have a functional dependence score of $\NMI_p(X;Y) \in (a,b]$. We are interested in the behavior of the estimation bias over $P$ under a distribution that puts equal weight on the four different regimes ``weak'' ($P[0,0.25]$), ``low'' ($P[0.25,0.5]$), ``high'' ($P[0.5,0.75]$), and ``strong'' ($P[0.75,1]$).
	
	More specifically, let $\tau(\D_n)$ be the result of the $\NMI$-estimator $\tau$ computed on data $\D_n$ and $\biasn(p,\tau)$ the bias of $\tau$ when fixing the underlying pmf to $p$, i.e.,
	\begin{equation}
		\biasn(p,\tau)=\expectind{\D_n \sim p}{\tau(\D_n)}-\NMI_{p}(X;Y) \enspace .
		\label{eq:biasp}
	\end{equation}
	To estimate the expected value $\mu_n(\tau)$ and standard variation $\sigma_n(\tau)$ of the absolute bias $\abs{\biasn(p,\tau)}$ for the pmfs from $P$, we uniformly sample $100$ pmfs $p^{(1)},\dots, p^{(100)}$ in equal proportions from the four different regimes, i.e., $25$ each from $P[0,0.25]$, $P[0.25,0.5]$, $P[0.5,0.75]$, and $P[0.75,1]$.
	For each pmf $p^{(i)}$ we can calculate the true $\NMI$ value directly from its definition.
	To compute $\biasn(p^{(i)},\tau)$ it then only remains to empirically estimate the expectation term in Eq.~\eqref{eq:biasp}, for which we sample per pmf $p^{(i)}$ a total of $1000$ datasets $D_n \sim p^{(i)}$ of size $n$. By averaging over all $p^{(i)}$, we end up with the desired estimates $\hat{\mu}_n(\tau)$ and $\hat{\sigma}_n(\tau)$ for the absolute bias of estimator $\tau$ with sample size $n$.
	
	\begin{figure}[t]
		\centering
		\begin{minipage}[b]{0.75\linewidth}
			\ifgenplot
			\ifpdf
			\tikzsetnextfilename{figure2}
			\fi
			\begin{tikzpicture}
			\begin{axis}[eda line, xlabel={data size $n$},ylabel={avg. abs. bias $\hat{\mu}_n(\tau)$}, ymin=0.0, ymax=0.5, height=5cm, width=\linewidth, xmin=5.0, xmax=60.0, legend style={nodes={scale=0.8, transform shape}, at={(0.925,1.0)}, anchor=north}, ytick={0.0,0.1,0.2,...,1.0}, xtick={5.0,10.0,20.0,30.0,40.0,50.0,60.0}]
			\pgfplotsinvokeforeach{1,...,3}{
				\addplot table[x index=0, y index=#1, header=true] {../expres/avg_abs_bias.dat};
			}
			\legend{\ourScore, \AMI, \hNMI}
			\end{axis}
			\end{tikzpicture}
			\else
			\includegraphics{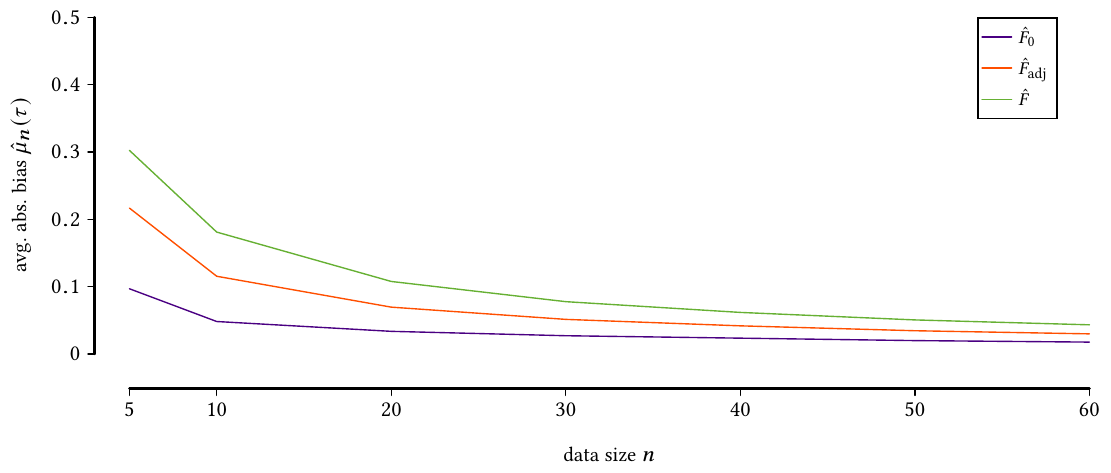}
			\fi 	
		\end{minipage}
		\caption{Estimated absolute bias $\hat{\mu}_n(\tau)$ of estimators $\tau=\hat{\NMI}$, $\AMI$, and the reliable fraction of information $\ourScore$, averaged over all 100 pmfs $p \in P$ across different data sizes $n$.}
		\label{fig:bias}
	\end{figure}
	Equipped with this procedure we can go ahead and compare the performance of $\ourScore$ to other estimators. In addition to the naive uncorrected estimator $\hat{\NMI}$ we also introduce an alternative correction resulting from the application of the quantification adjustment framework proposed by Romano et al.~\cite{romano:2016:chance}.
	This correction, which we denote by \AMI, is defined as
	\begin{align*}
		\AMI(\X;Y)= \frac{\hat{\MI}(\X,Y)-\epm{\hI(\X,Y)}}{\hat{H}(Y)-\epm{\hI(\X,Y)}} \enspace .
	\end{align*}
	Thus, we consider $\tau \in \lbrace \hat{\NMI}, \AMI, \ourScore \rbrace$.
	For each estimator, we compute $\hat{\mu}_n(\tau)$ and $\hat{\sigma}_n(\tau)$ for data sizes $n \in \{5,10,20,30,40,50,60\}$. We focus on small data sizes, because any consistent estimator converges to $\NMI$ for $n \rightarrow \infty$. Furthermore, we can expect the small data sizes for the small domain size $\card{\domain{X}}=3$ of this experiment to behave similar to larger data sizes combined with the potentially huge domains occurring during the algorithmic search (resulting from the complex random variables $\X$).
	
	\begin{figure}[t]
		\centering
		\begin{minipage}[b]{0.475\linewidth}
			\centering
			\ifgenplot
			\ifpdf
			\tikzsetnextfilename{figure3}
			\fi
			\begin{tikzpicture}
			\begin{axis}[eda line, xlabel={data size $n$},ylabel={avg. abs. bias $\hat{\mu}_n(\tau)$}, ymin=0.0, ymax=0.5, height=5cm, width=\linewidth, xmin=5.0, xmax=60.0, legend style={nodes={scale=0.8, transform shape}, at={(0.85,1.0)}, anchor=north}, ytick={0.0,0.1,0.2,...,1.0}, xtick={5.0,10.0,20.0,30.0,40.0,50.0,60.0}]
			\pgfplotsinvokeforeach{1,...,3}{
				\addplot table[x index=0, y index=#1, header=true] {../expres/avg_abs_bias_q1.dat};
			}
			\legend{\ourScore, \AMI, \hNMI}
			\end{axis}
			\end{tikzpicture}
			\else
			\includegraphics{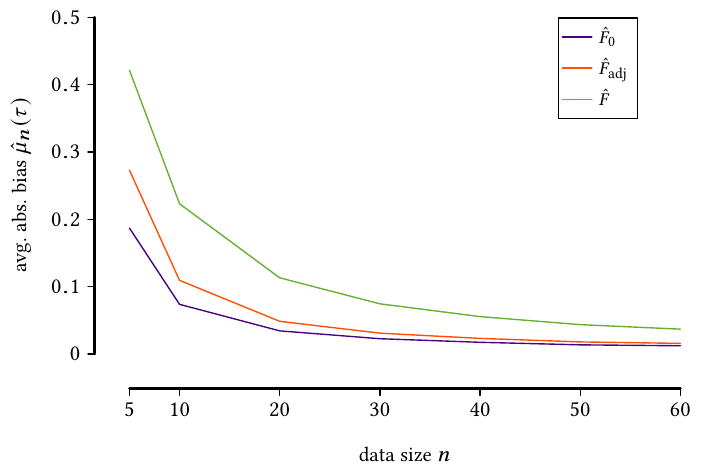}
			\fi 	
			\subcaption{Weak dependence}\label{fig:bias:q1}
		\end{minipage}
		\begin{minipage}[b]{0.475\linewidth}
			\centering
			\ifgenplot
			\ifpdf
			\tikzsetnextfilename{figure4}
			\fi
			\begin{tikzpicture}
			\begin{axis}[eda line,  xlabel={data size $n$},ylabel={avg. abs. bias $\hat{\mu}_n(\tau)$}, ymin=0.0, ymax=0.5, height=5cm, width=\linewidth, xmin=5.0, xmax=60.0, legend style={nodes={scale=0.8, transform shape}, at={(0.85,1.0)}, anchor=north}, ytick={0.0,0.1,0.2,...,1.0}, xtick={5.0,10.0,20.0,30.0,40.0,50.0,60.0}]
			\pgfplotsinvokeforeach{1,...,3}{
				\addplot table[x index=0, y index=#1, header=true] {../expres/avg_abs_bias_q4.dat};
			}
			\legend{\ourScore, \AMI, \hNMI}
			\end{axis}
			\end{tikzpicture}
			\else
			\includegraphics{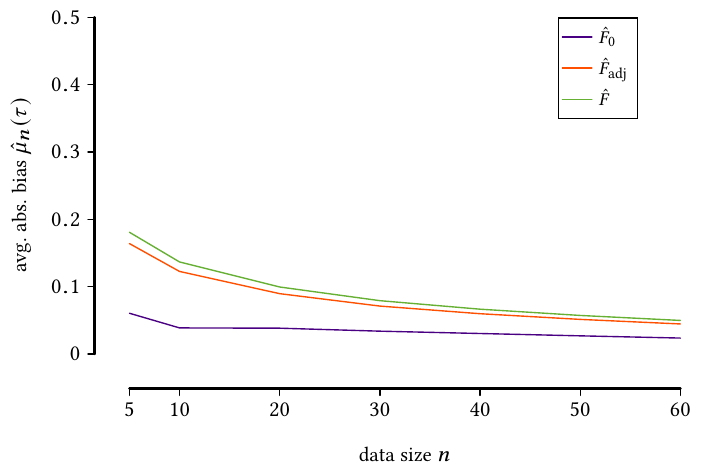}
			\fi 	
			\subcaption{Strong dependence}\label{fig:bias:q4}
		\end{minipage}
		\caption{Estimated absolute bias $\hat{\mu}_n(\tau)$ of estimators $\tau=\hat{\NMI}$, $\AMI$, and the reliable fraction of information $\ourScore$, averaged over $p \in P[0,0.25]$ (left) and over $p \in P[0.75,1]$ (right), across different data sizes $n$.}
		\label{fig:bias2}
	\end{figure}

	In Figure~\ref{fig:bias} we present the results for the estimated absolute bias $\hat{\mu}_n(\tau)$ of estimators $\tau=\hat{\NMI}$, $\AMI$, and the reliable fraction of information $\ourScore$, averaged over all 100 pmfs $p \in P$ across different data sizes $n$. We observe that our estimator $\ourScore$ achieves a lower bias for all $n$ compared to $\hat{\NMI}$ and $\AMI$, and converges fast to a bias close to zero after 10 samples.  The differences in bias are apparent in the cases of $n=5$ and $n=10$ samples. These are cases where the insufficient data samples cause $\hat{H}(Y \mid \X=\vx)$ to approach 0, independent of the true distribution $\hat{p}$. In such scenarios, the estimators $\hat{\NMI}$ and $\AMI$  start to show functional dependence, while $\ourScore$ is designed to show independence for reliability reasons. So it is useful to see that this design, also offers a better bias.
	
	We can further draw conclusions about the behavior of $\ourScore$ by considering only the ``weak" and ``strong" dependencies, i.e., where $\NMI$ is closer to independence and functional dependence respectively. As such, we present in Figure~\ref{fig:bias2} the estimated absolute bias $\hat{\mu}_n(\tau)$, averaged over $p \in P[0,0.25]$ (left), and $p \in P[0.75,1]$ (right). We see that in both cases, i.e., when there is low and high functional dependence, $\ourScore$ achieves a lower bias, as it was the case with $p \in P[0,1]$. Since our score aims to be unbiased under the null hypothesis, we observe a high correction over the $\hat{\NMI}$ for weak dependencies. Even for high dependencies, where one could expect to have less correction, we see that $\ourScore$ is practically unbiased across all $n$.
	
	For the standard deviation $\hat{\mu}_n(\tau)$, we present in Figure~\ref{fig:std} the results after averaging over all 100 pmfs $p \in P$ across different data sizes $n$. We observe that $\ourScore$ has an almost equal $\hat{\mu}_n$ for all $n$ in comparison with $\hat{\NMI}$, and a lower one with $\AMI$. Estimators achieve better bias by trading variance, and from Figures~\ref{fig:bias},~\ref{fig:bias2}, and \ref{fig:std}, we see that in comparison with $\hat{\NMI}$ and $\AMI$, we trade very little variance for a large bias correction. With the previous observations, we can conclude that $\ourScore$ is a suitable estimator for the fraction of information $\NMI$, as desired.

	\begin{figure}[t]
		\centering
		\begin{minipage}[b]{0.75\linewidth}
			\ifgenplot
			\ifpdf
			\tikzsetnextfilename{figure5}
			\fi
			\begin{tikzpicture}
			\begin{axis}[eda ybar, xlabel={data size $n$},ylabel={avg. std. of bias $\hat{\sigma}_n(\tau)$}, ymin=0.0, ymax=0.35, xtick=data, enlarge x limits=0.125, bar width=0.5em, symbolic x coords={5,10,20,30,40,50,60}, legend pos=south east, height=5cm, width=\linewidth, legend image post style={scale=0.25}, legend style={nodes={scale=0.8, transform shape}, at={(0.85,1.0)},anchor=north},
			legend style = {inner sep=1pt, cells={font=\small}, }, 	  	
			]
			\pgfplotsinvokeforeach{1,...,3}{
				\addplot+[mark=none] table[x index=0, y index=#1, header=true] {../expres/avg_std_bias.dat};
			}
			\legend{\ourScore, \AMI, \hNMI}
			\end{axis}
			\end{tikzpicture} 
			\else
			\includegraphics{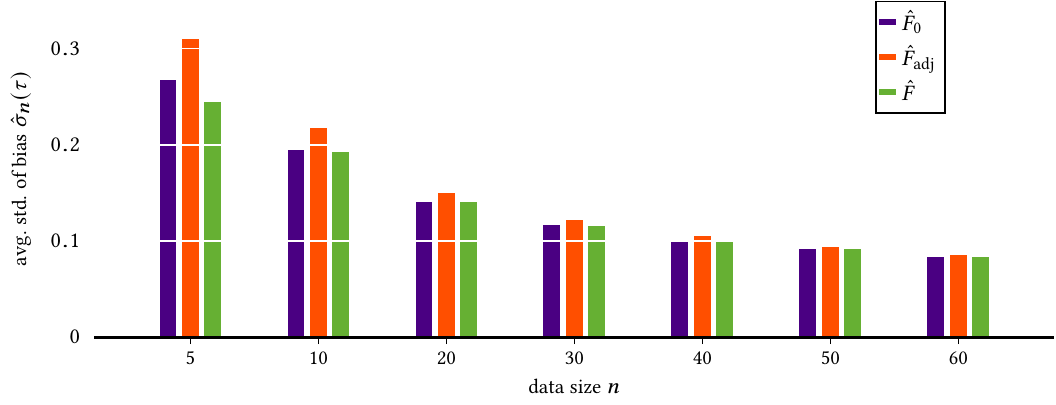}
			\fi 	
		\end{minipage}
		\caption{Estimated standard deviation  $\hat{\sigma}_n(\tau)$ of estimators $\tau=\hat{\NMI}$, $\AMI$, and the reliable fraction of information $\ourScore$, averaged over all 100 pmfs $p \in P$ across different data sizes $n$.}
		\label{fig:std}
	\end{figure}

	\subsection{Optimization performance}\label{sec:algPerf}
	
	In this section, we investigate the performance of the branch-and-bound algorithm combined with our optimistic estimator $\oest(\cX)=1-\hat{b}_0(\X,Y,n)$. Specifically, we are interested in the effects of having a data dependent quantity in our bounding function, i.e., $\hat{b}_0(\X,Y,n)$, which in addition, acts as a penalty for non-reliable dependencies. 
	
	For this experiment we utilize the KEEL data repository.\!\footnote{http://www.keel.es/} We use all classification datasets with $n \in [100,13000]$, $d \in [6,90]$, without missing values, resulting in 42 datasets with an average number of 2800 data samples and 24 attributes.\!\footnote{Some attributes with obviously wrong type declarations were corrected, e.g., 'sex' was often corrected to be categoric and not metric.} All metric attributes are discretized using the method of~\citet{fayyad:1993:discr}. The datasets are summarized in Table~\ref{tab:addlabel}. All experiments were executed on a dedicated Intel Xeon E5-2643 v3 machine with 256 GB memory. We make our code available online for research purposes.\!\footnote{\oururl}
	
	We employ the algorithm to retrieve the top dependence, and individually set the $\alpha$ for each dataset, such that the algorithm terminates in less than 30 minutes. We report time, $\alpha$ used, pruning percentage of the search space, depth of the solution, and the maximum depth the algorithm had to explore in Table~\ref{tab:addlabel}.
	
	\begin{figure}[t]
		\centering
		\begin{minipage}[b]{0.90\linewidth}
			\ifgenplot
			\ifpdf
			\tikzsetnextfilename{figure6}
			\fi
			\begin{tikzpicture}
			\begin{axis}[
			eda scatter3, 
			mark=*,
			xlabel={dataset IDs}, 
			ylabel={time (s)}, 
			xticklabel style={rotate=90},
			width=\linewidth, 
			legend style={nodes={scale=1, transform shape}, at={(0.05,1)}, anchor=north},
			ymode=log,
			ymin=0.0, 
			ymax=10000000000.0,
			height=6cm,
			ytick={0,1,10,100,1000,10000,100000,1000000,10000000,100000000,1000000000,10000000000,100000000000},
			xtick=data,
			grid = major,
			major grid style  = {lightgray, dash pattern = on 1pt off 3 pt},
			xmajorgrids = false,
			xticklabels={3,6,16,18,1,11,20,22,31,2,10,41,38,5,14,9,13,42,40,36,39,25,4,23,17,35,34,33,27,29,24,12,26,7,19,37,8,32,15,30,21,28},
			legend style = {inner sep=2pt, cells={font=\small}, }, 	 	
			]
			\addplot+[mambacolor1,mark=o,mark size=0.75] table[x index=6,  y index=1] {../expres/runtimes_diff-june17.dat};
			\addplot+[mambacolor2,mark=*,mark size=0.75] table[x index=6, y index=3] {../expres/runtimes_diff-june17.dat};
			\legend{$\oest$, $\hoest$}
			\end{axis}
			\end{tikzpicture} 	 	
			\else
			\includegraphics{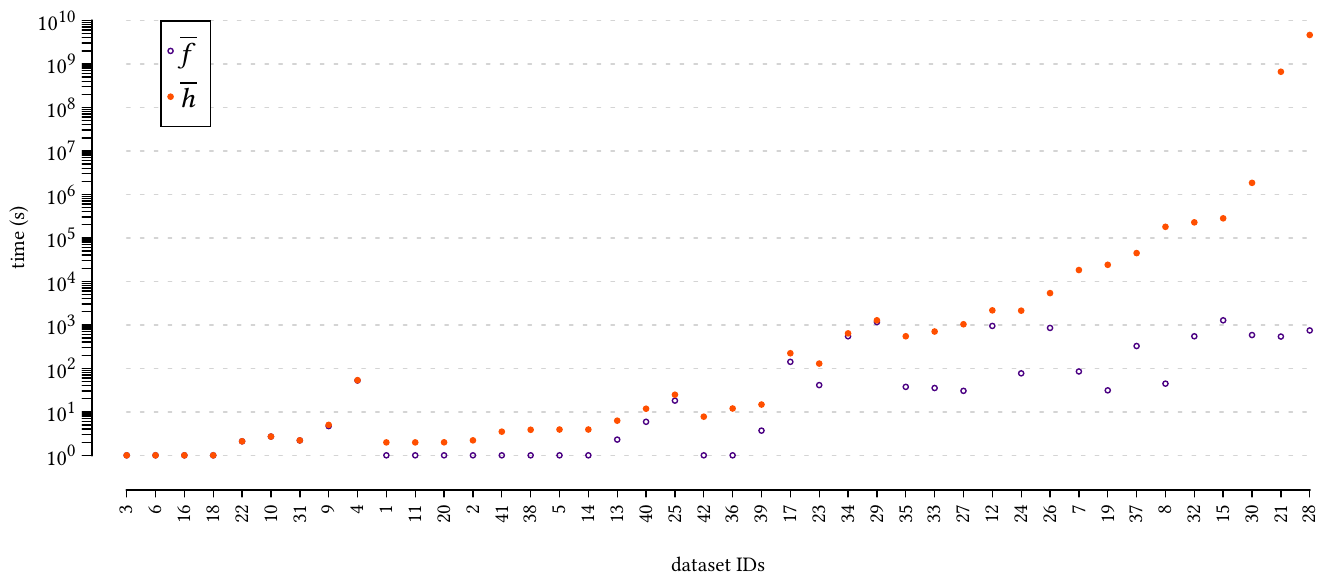}
			\fi 	
		\end{minipage}
		\caption{Computation time of branch-and-bound with $\oest$ and $\hoest$ as bounding function respectively, across all datasets  sorted according to speed-up.}
		\label{fig:algoPerf}
	\end{figure}
	
	The percentage of the datasets where $\alpha=1$, i.e., 30 out of 42, show that an optimal solution can be discovered in under 30 minutes for the majority of the cases considered. In 28 datasets it takes a maximum of 6 minutes. For the rest, reasonable approximations to the optimal solution can be achieved, e.g., $\alpha=0.9, 0.8$.  
	
	We observe that our bounding function $\oest$, is effective in pruning a considerable amount of the search space, i.e., $67.9\%$ on average. In addition, an average of $7.5$ maximum depth combined with an average solution depth of $4.0$, show that $\oest$ is not simply pruning on set cardinality, but it is a data dependent quantity that selectively explores the search space based on the structure of the data. That is, it can potentially go to higher levels for promising candidates. 
	
	To further corroborate on the previous observation, we consider a hypothetical optimistic estimator $\hoest$ that prunes based solely on the cardinality  of $\X$. For a meaningful comparison, we provide an oracle to this method and restrict the search space to the maximum depth that $\oest$ had to explore, and not the complete space of size $2^d$ nodes. For example, if the maximum depth is $l$, then the branch-and-bound with $\hoest$ as bounding function will visit $q=\sum_{i=1}^{l}\binom{d}{i}$ nodes. The estimated time for every dataset is then $q \times t$, where $t$ is a node processing time estimated by dividing the completion time of branch-and-bound with $\oest$, with the number of nodes visited. We plot the computation time for both estimators in Figure~\ref{fig:algoPerf}. The datasets are sorted in ascending order of speed-up. 
	
	We see that in the majority of the datasets, taking into account the structure of the data is of crucial importance. This is most evident in the last two datasets, where it would take 20 and 146 years respectively to find the solution based on cardinality alone. This plot shows the potential of a data dependent optimistic estimator, as opposed to a simple function evaluating statistics as cardinality, in a potentially hard optimization problem. 
	
	Regarding reliability, useful conclusions can be drawn from the average solution dimensionality of $4.0$, which is a reasonable number for the size of the data considered. Trying to maximize other estimators for example, such as $\AMI$ or $\hat{\NMI}$, would result in very large dimensionalities. 
	
	\subsection{Exemplary discoveries}
	
	After investigating the statistical properties of $\ourScore$ and its algorithmic performance, we close this section with examples of concrete approximate functional dependencies discovered in two different applications: determining the winner of a tic-tac-toe configuration and predicting the preferred crystal structure of octet binary semi-conductors.
	Both settings are examples of problems where elementary input features are available, but to correctly represent the input/output relation either non-linear models have to be used or---if interpretable models are sought---complex auxiliary feature have to be constructed from the given elementary features.

	\begin{figure}[t]
		\centering
		\begin{minipage}[b]{0.475\linewidth}
			\centering
			\begin{tikzpicture}[
			tick/.style={},
			highlight/.style={red},
			bar/.style={thick},
			scale=1
			]
			\foreach \i in {0,1,2,3} {
				\path[draw,bar] (0,\i) -- (+3,\i);
				\path[draw,bar] (\i,0) -- (\i,3);
			}
			\foreach \x/\y/\i in {
				1/2/4,
				2/1/2,2/3/8,
				3/2/6} {
				\node[tick] at (\x-.5,3-\y+.5) {$X_\i{}$};
			}
			\foreach \x/\y/\i in {
				1/1/1,1/3/7,
				2/2/5,
				3/1/3,3/3/9} {
				\node[tick,highlight] at (\x-.5,3-\y+.5) {$X_\i{}$};
			}
			\end{tikzpicture}
		\end{minipage}
		\begin{minipage}[b]{0.475\linewidth}
			\centering
			\begin{tikzpicture}[
			tick/.style={},
			bar/.style={thick},
			scale=1
			]
			\foreach \i in {0,1,2,3} {
				\path[draw,bar] (0,\i) -- (+3,\i);
				\path[draw,bar] (\i,0) -- (\i,3);
			}
			\foreach \x/\y/\i in {
				1/1/3,1/2/1,1/3/3,
				2/1/1,2/2/4,2/3/1,
				3/1/3,3/2/1,3/3/3} {
				\node[tick] at (\x-.5,3-\y+.5) {\i};
			}
			\end{tikzpicture}
		\end{minipage}
		\caption{Tic-tac-toe board with input variables in corresponding board positions and variables contained in top approximate functional dependency marked in red ({left}); and number of winning combinations each position is involved in ({right}).}
		\label{fig:tic}
	\end{figure}
	
	The tic-tac-toe application \cite{matheus1989constructive} is one of the earliest examples of this complex feature construction problem.
	Tic-tac-toe is a game of two players where each player picks a symbol from $\lbrace x,o \rbrace$ and, taking turns, marks his symbol in an unoccupied cell of a $3\times3$ game board. A player wins the game if he marks 3 consecutive cells in a row, column, or diagonal. A game can end in draw, if the board configuration does not allow for any winning move. 
	The dataset consists of 958 end game, winning configurations (i.e., there are no draws). The 9 input variables $\cI=\lbrace X_1,\dots,X_9 \rbrace$ represent the cells of the board, and can have 3 values $\lbrace{ x,o,b \rbrace}$, where $b$ denotes an empty cell (see Fig.~\ref{fig:tic}). The output variable $Y$ with $\domain{Y}=\lbrace \text{win}, \text{loss} \rbrace$ is the outcome of the game for player $x$. 
	
	Searching for approximate functional dependencies reveals as top pattern with empirical fraction of information $\hat{\NMI}=0.61$ and corrected score $\ourScore=0.45$ the variable set
	\[
	\X=\lbrace X_1, X_3, X_5, X_7, X_9 \rbrace
	\]
	i.e., the four corner cells and the middle one. This is a sensible discovery as these cells correspond exactly to those involved in the highest number of winning combinations (see Fig.~\ref{fig:tic}).
	Knowing the state of these cell provides, therefore, a high amount of information about the outcome of the game. 
	Moreover, removing a variable results in a loss of a considerable amount of information, while adding a variable would provide more information, but also redundancy. That is, the increase of fraction of information would not be higher than the increase of $\hat{b}_0$.

	\begin{figure}
		\centering
		\includegraphics[width=0.6\linewidth]{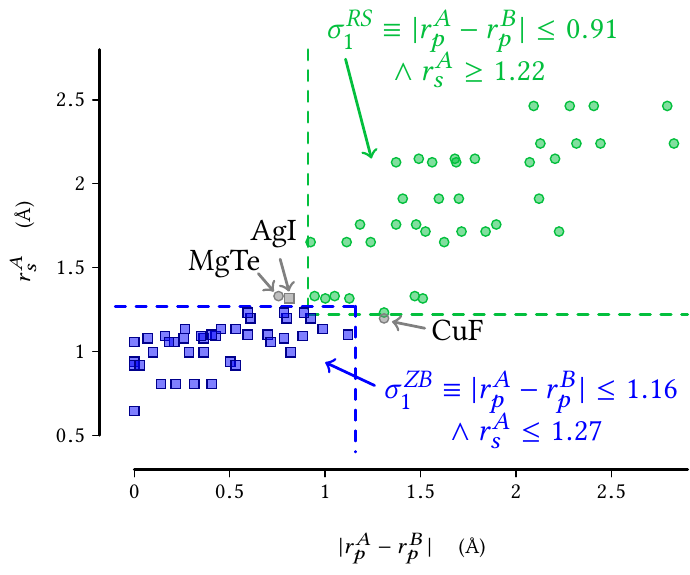}
		\caption{Binary semiconductors that crystalize as zinkblende (boxes) and rocksalt (circles); blue and green materials are correctly classified by subgroup-based prediction model---the involved rules (annotated) use elements of top functional dependency as two out of three variables. (source: \citet{goldsmith2017uncovering}).}
	\end{figure}
	
	Our second example is a classical problem from Materials Science \cite{van1969quantum}, which has meanwhile become a canonical example for the challenge of the automatic discovery of interpretable and ``physically meaningful'' prediction models of material properties \cite{ghiringhelli2015big, goldsmith2017uncovering}.
	The task is to predict the symmetry or crystal structure, in which a given binary compound semi-conductor material will crystalize.
	That is, each of the $82$ material involved consist of two atom types (A and B) and the output variable $Y=\{\text{rocksalt}, \text{zincblende} \}$ describes the crystal structure it prefers energetically.
	The input variables are 14 electro-chemical features of the two atom types considered in isolation: the radii of the three different electron orbitals shapes $s$, $p$, and $d$ of atom type A denoted as $r_s(A), r_p(A), r_d(A)$ as well as four important energy quantities that determine its chemical properties (electron affinity, ionization potential, HOMO and LUMO energy levels); the same variables are defined for component B.
	
	For this dataset the top approximate functional dependency with $\ourScore=0.707$ and uncorrected empirical fraction of information $\hat{\NMI}=0.735$ is
	\[
	\X=\{r_s(A), r_p(A)\}
	\]
	i.e., the atomical $s$ and $p$ radii of component A. Again, this is a sensible finding, since these two variables constitute two out of three variables contained in the best structure prediction model that can be identified using the non-linear subgroup discovery approach \citep{goldsmith2017uncovering}. Also both features are involved in the best linear LASSO model based on systematically constructed non-linear combinations of the elementary input variables \citep{ghiringhelli2015big}.
	The fact that not all variables of those models are identified by the functional dependency discovery algorithm can likely be explained by the facts that (a) the continuous input variables had to be discretized and (b) the dataset is extremely small with only 82 entries, which renders the discovery of reliable patterns with more than two variables very challenging.
	
	\section{conclusion} \label{sec:concl}
	
	We considered the dual problem of measuring and efficiently discovering approximate functional dependencies from data. We adopted an information theoretic approach, and proposed a fraction of information score that is reliable and achieves a good bias. In addition, we proposed an efficient optimistic estimator that allows for the effective discovery of the optimal, or $\alpha$-approximate top-$k$ dependencies of the target variable.
	
	Although we carefully constructed the proposed correction term $\hat{b}_0$ such that bias under those regimes that are most problematic for searching for high-dimensional functional dependencies is removed, other scores could potentially be found that estimate the fraction of information with even less bias.
	
	Other correction terms could also lead to other algorithms. The computational complexity for finding reliable functional dependencies is still open. Hence, polynomial time algorithm for this or adapted problem variants are a possibility. 
	Similarly so, efficient tight(er) optimistic estimators would improve the runtime of branch-and-bound, as fewer nodes would have to be expanded to discover the optimal solution.
	
	Both our score and optimistic estimator are specifically defined for discrete data. While in this paper we only considered univariate discrete targets, our scores can be trivially extended to multivariate discrete variables.
	Clearly, it is also of interest to discover approximate functional dependencies from continuous real-valued data. As entropy has been defined for such data, e.g. differential entropy~\cite{shannon:1948:communication} and cumulative entropy~\cite{rao:04:cre}, it is possible to instantiate fraction of information scores. It will be interesting to see whether we can also efficiently correct these scores for chance, and whether optimistic estimators exist that allow for effective search.

	\begin{table*}[t]
		
		\centering
		\begin{tabular}{rrrrrrrrrr}
			\textbf{ID}    & \textbf{Name} & \textbf{\#rows} & \textbf{\#attrs.} & \textbf{\#clases} & $\alpha$ & \textbf{time(s)} & \textbf{max dep.} & \textbf{sol. dep.} & \textbf{prune $\%$} \\
			\toprule 
			1     & abalone & 4174  & 8     & 28    & 1     & 2.1   & 8     & 3     & 0.00 \\
			2     & appendicitis & 106   & 7     & 2     & 1     & 1.0     & 6     & 3     & 50.00 \\
			3     & tic   & 958   & 9     & 2     & 1     & 1.0     & 7     & 5     & 1.95 \\
			4     & australian & 690   & 14    & 2     & 1     & 3.7   & 11    & 5     & 75.01 \\
			5     & bupa  & 345   & 6     & 2     & 1     & 1.0     & 1     & 1     & 96.88 \\
			6     & car   & 1728  & 6     & 4     & 1     & 1.0    & 5     & 4     & 1.56 \\
			7     & chess & 3196  & 36    & 2     & 0.7   & 84.8  & 7     & 4     & 99.99 \\
			8     & coil2000 & 9822  & 85    & 2     & 0.1   & 44.5  & 5     & 4     & 99.99 \\
			9     & contraceptive & 1473  & 9     & 3     & 1     & 1.0     & 7     & 4     & 75.00 \\
			10     & ecoli & 336   & 7     & 8     & 1     & 1.0     & 6     & 4     & 50.00 \\
			11    & flare & 1066  & 11    & 6     & 1     & 2.7   & 11    & 3     & 0.00 \\
			12    & german & 1000  & 20    & 2     & 1     & 76.9  & 11    & 7     & 97.29 \\
			13    & glass & 214   & 9     & 7     & 1     & 1.0     & 7     & 4     & 75.00 \\
			14    & heart & 270   & 13    & 2     & 1     & 1.0     & 9     & 5     & 75.34 \\
			15    & ionosphere & 351   & 33    & 2     & 1     & 1272.7 & 11    & 4     & 99.98 \\
			16    & led7digit & 500   & 7     & 10    & 1     & 1.0     & 7     & 5     & 0.00 \\
			17    & lymphography & 148   & 18    & 4     & 1     & 41.2  & 11    & 5     & 71.79 \\
			18    & monk  & 432   & 6     & 2     & 1     & 1.0     & 4     & 3     & 10.94 \\
			19    & movement-libras & 360   & 90    & 15    & 0.4   & 31.2  & 5     & 3     & 99.99 \\
			20    & nursery & 12690 & 8     & 5     & 1     & 2.2   & 7     & 5     & 0.78 \\
			21    & optdigits & 5620  & 64    & 10    & 0.5   & 538.1 & 10    & 3     & 99.99 \\
			22    & page  & 5472  & 10    & 5     & 1     & 4.7   & 8     & 4     & 7.23 \\
			23    & penbased & 10992 & 16    & 10    & 1     & 141.2 & 5     & 3     & 93.33 \\
			24    & ring  & 7400  & 20    & 2     & 0.6   & 944.4 & 7     & 3     & 94.24 \\
			25    & saheart & 462   & 9     & 2     & 1     & 1.0     & 5     & 4     & 93.75 \\
			26    & satimage & 6435  & 36    & 7     & 0.9   & 850.9 & 5     & 3     & 99.99 \\
			27    & segment & 2310  & 19    & 7     & 1     & 35.3  & 8     & 2     & 98.37 \\
			28    & sonar & 208   & 60    & 2     & 1     & 744.7 & 13    & 6     & 99.99 \\
			29    & spambase & 4597  & 57    & 2     & 0.5   & 30.4  & 4     & 3     & 99.99 \\
			30    & spectfheart & 267   & 44    & 2     & 0.5   & 583.1 & 10    & 5     & 99.99 \\
			31    & splice & 3190  & 60    & 3     & 0.6   & 52.5  & 3     & 3     & 99.99 \\
			32    & texture & 5500  & 40    & 11    & 0.8   & 546.4 & 7     & 3     & 99.99 \\
			33    & thyroid & 7200  & 21    & 3     & 0.9   & 37.5  & 7     & 3     & 99.35 \\
			34    & twonorm & 7400  & 20    & 2     & 0.9   & 1160.2 & 6     & 4     & 94.74 \\
			35    & vehicle & 846   & 18    & 4     & 1     & 547.2 & 13    & 4     & 15.44 \\
			36    & vowel & 990   & 13    & 11    & 1     & 18.1  & 10    & 3     & 27.69 \\
			37    & wdbc  & 569   & 30    & 2     & 1     & 326.1 & 10    & 4     & 99.96 \\
			38    & wine  & 178   & 13    & 3     & 1     & 1.0     & 6     & 3     & 77.37 \\
			39    & winequality red & 1599  & 11    & 11    & 1     & 1.0     & 8     & 6     & 87.50 \\
			40    & winequality white & 4898  & 11    & 11    & 1     & 5.9   & 10    & 9     & 50.00 \\
			41    & yeast & 1484  & 8     & 10    & 1     & 1.0    & 7     & 7     & 50.00 \\
			42    & zoo   & 101   & 15    & 7     & 1     & 2.3   & 9     & 5     & 84.41 \\
			\midrule
			Average     &       & 2800.0      &   24.0     &   5.6    &   0.89    &  194.0      & 7.5 & 4.0 & 67.9 \\
			\bottomrule
		\end{tabular}%
		\caption{Datasets used in Section~\ref{sec:algPerf}. The table contains information about the name and ID of the datasets used, number of rows, input variables, and classes, the $\alpha$ used for completion in less than 30 minutes, the time in seconds, the maximum depth of the algorithm, the depth of the best solution, and the percentage of the pruned search space.}
		\label{tab:addlabel}%
	\end{table*}%

	\begin{acks}
		The authors wish to thank Luca Ghiringhelli and Brian Goldsmith for insightful discussions. 
		Panagiotis Mandros is supported by the International Max Planck Research School for Computer Science (IMPRS-CS). 
		The authors are supported by the Cluster of Excellence ``Multimodal Computing and Interaction'' within the Excellence Initiative of the German Federal Government.
	\end{acks}
	
	\balance
	
	\bibliographystyle{ACM-Reference-Format}
	

\end{document}